\newtheorem{theorem}{Theorem}
\newtheorem{lemma}{Lemma}
\newtheorem{corollary}{Corollary}
\begin{document}

\title{Improved Delay Estimates for a Queueing Model for Random Linear Coding for Unicast}

\author{Mohammad Ravanbakhsh$^1$, \'{A}ngela I. Barbero$^2$, and \O yvind Ytrehus$^1$\\
$^1$Dept. of Informatics, University of Bergen, N-5020 Bergen, Norway,\\
$^2$Dept. of Applied Mathematics, University of Valladolid, 47011 Valladolid, Spain,\\ Email: $^1$\{mohammad.ravanbakhsh,oyvind\}@ii.uib.no, $^2$angbar@wmatem.eis.uva.es}
\maketitle

\begin{abstract}
Consider a lossy communication channel for unicast with zero-delay feedback. For this communication scenario, a simple retransmission scheme is optimum with respect to delay.
An alternative approach is to use random linear coding in automatic repeat-request (ARQ) mode. We extend the work of Shrader and Ephremides in \cite{Shrader-MILCOM07}, by deriving an expression for the delay of random linear coding over a field of infinite size. Simulation results for various field sizes are also provided.
\end{abstract}

\textbf{Keywords:} Random linear coding, Feedback channel, Erasure channel, ARQ, Bulk service, Delay.

\section{Introduction}
Consider a communication model where packets arrive from an information source to a network node, the \emph{sender}, according to a deterministic or random arrival pattern, and are transmitted to another network node, the \emph{receiver}, according to a deterministic or random service pattern. Clearly, in a balanced system, the average rate at which packets arrive cannot exceed the average rate at which packets can be transmitted. However, even when the arrival rate is lower than the effective transmission rate, queues can build up due to randomness in the processes involved, thus increasing the delay of packet delivery.

Queuing theory provides a natural framework for the study of the delay of protocols operating on such communication models. Shrader and Ephremides \cite{Shrader-MILCOM07} compared the delay experienced by random linear packet-based coding to that of a basic retransmission scheme, for a simple \emph{lossy unicast} model.
\subsection{Contributions of this work}
In this paper we refine the delay estimates produced in \cite{Shrader-MILCOM07}. In particular, for the case of coding over an infinite (or very large) field, in Section~\ref{subsec:codInf} we derive an exact expression for the delay of a random linear packet-based coding ARQ scheme.
Section~\ref{subsec:Finite} presents improved estimates of the delay for the finite field case, and in Section~\ref{subsec:Sim} we have collected some results from computer simulations that we have performed.
\subsection{Related work}
As far as we know, \cite{Shrader-MILCOM07} is the only paper in the literature to deal with this specific model.  Related but  different work in a unicast setting can be found in \cite{ErOzMe06,ErOzMeAh08}, but in the model addressed in these papers packets arrive to the sender as a block, so there is no loss in waiting. Reference \cite{LuPaFraMeKo06} addresses delay in forward-error correction schemes for sender nodes with a finite buffer capacity.

Random linear coding \cite{Luby} is known to work particularly well in the multicast case \cite{HoKoMeKa03}, and multicast problems similar to the ones we discuss here are investigated in \cite{Shrader-ITW06,SuShMe08}.

\subsection{Outline} This paper is structured as follows: Section~\ref{sec:bg} contains background material, including a description of the queueing model, basic notation, and a brief overview over prior work. In Section~\ref{sec:nr} we present new results, including the determination of the delay for coding over an infinite field and bounds for the finite field case, and results obtained by simulation. We present conclusions in Section~\ref{sec:sum}.
\section{Background}
\label{sec:bg}
Our departure point in this paper is Shrader and Ephremides' work in \cite{Shrader-MILCOM07}. In this section we summarize necessary background material, including a description of the communication model, basic notation, and some prior results.
\subsection{Notation and model description}
The communication model considered in  \cite{Shrader-MILCOM07} and in this paper contains the  elements listed below.

\begin{enumerate}
 \item A \emph{sender} with unlimited buffer memory.
 \item A \emph{packet source} that injects \emph{fixed-length} packets into the sender. \emph{In this paper,} as in \cite{Shrader-MILCOM07}, time is \emph{slotted}, and the packets arrive at the sender through a Bernoulli process with \emph{arrival rate} $\lambda$. Hence, during a given time slot, the number of new packets injected into the sender is either zero or one, with probabilities $1-\lambda$ and $\lambda$, respectively.
 \item A \emph{receiver} with unlimited patience.
 \item A \emph{channel} for sending packets from the sender to the receiver, and a \emph{feedback channel} for sending acknowledgements (or negative acknowledgements) from the receiver back to the sender. The channels have the following properties:
\begin{itemize}
 \item During a time slot, the sender may transmit exactly one fixed-length packet.
 \item A packet transmitted by the sender is successfully delivered to the receiver with probability  $q$, $ 0 <q \leq 1$. Two distinct packet transmissions are independent. Thus the average service time of the sender is $1/q$ for a single packet.
 \item There is \emph{no delay} between sending a packet and receiving a positive or negative acknowledgement. I. e., if a packet has been transmitted during a given time slot, it will be known at the start of the next time slot whether the transmission was successful or not.
\end{itemize}
 \item A \emph{communication protocol} that regulates the flow of packets between the sender and the receiver. Due to the assumption of zero acknowledgement delay, an automatic repeat-request (ARQ) scheme is optimum with respect to number of packets sent. The two schemes compared in \cite{Shrader-MILCOM07} are
\begin{itemize}
 \item \emph{Retransmission} of single packets: The sender collects arriving packets in a queue. When the queue is non-empty, the server selects a packet from the queue (without loss of generality, the queue can be ordered on a first in, first out basis), transmits the packet in the first available time slot, and continues to send that packet until a successful delivery is confirmed. The precise retransmission scheme (i.e stop-and-go, go-back-N, selective repeat) is not important in this context since there is no acknowledgement delay.
 \item \emph{Random linear coding:} When the queue is non-empty, the sender selects the first $k$ packets, $I_1,\ldots,I_k$,  in the queue (where $k = \min \{$ number of packets in queue, $K\}$ and $K$ is some  pre-selected integer that characterizes the protocol). The collection $I_1,\ldots,I_k$ will be referred to as a \emph{bulk}. The sender proceeds to generate encoded packets based on the bulk. (A bulk service approach for forward error correction has been studied in \cite{SaMu06}.) Each encoded packet $\tilde{I}$ is a random (or pseudo-random) linear combination \[\tilde{I}=\sum_{i=1}^k \alpha_i(\tilde{I}) I_i\] of the original information packets, where $\alpha_i(\tilde{I})$ are coefficients selected at random for each encoded packet $\tilde{I}$ from some field $\mathbb{F}$. We assume that the coefficients $\alpha_i(\tilde{I})$ can be conveyed to the receiver at no significant cost. The sender proceeds to generate and send randomly encoded packets until the receiver has successfully collected a set $\{\tilde{I}_1,\ldots,\tilde{I}_{k'}\}$, where $k' \geq k$, of encoded packets from which all of the original information packets $I_1,\ldots,I_k$  in the bulk can be recovered. Upon successful delivery, the sender returns to inspect its packet queue and will serve another bulk of packets. The protocol will be referred to as RLC($K,\mathbb{F}$), and is consistent with e. g. fountain coding \cite{Luby} or other random linear coding schemes. Note that the special case of $K=1$ corresponds to  retransmission  (for the only sensible coding scheme.)
\end{itemize}

\end{enumerate}

Figure~\ref{fig:systemmodel} summarizes the model.

\begin{figure}[htbp]
\vspace{-1.0cm}
\centerline{\includegraphics[width=9cm,keepaspectratio=true,angle=0]{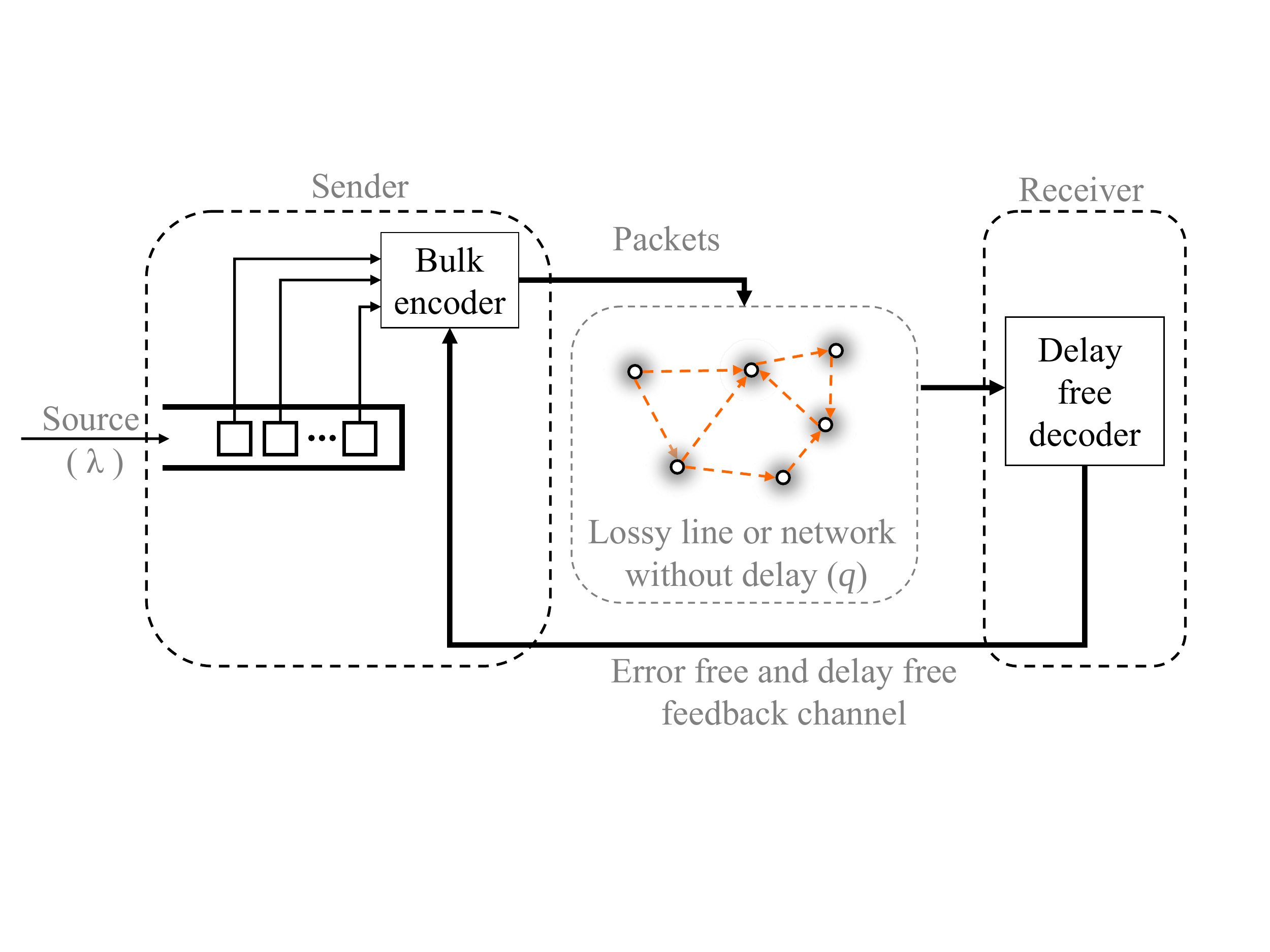}}
\vspace{-1.5cm}
\caption{System Model.}\label{fig:systemmodel}
\end{figure}

In a practical general communication model, there are many measures of performance that conceivably may be of interest, such as throughput, sender's delay, receiver's delay, reliability, efficiency in terms of usage of resources, and processing complexity. In this paper we focus on the \emph{sender's delay}, under the implicit assumptions that the communication protocol ensures a completely reliable packet delivery, that processing cost or resource usage are of no importance, that the receiver's delay is precisely determined by the sender's delay, and that the throughput is given exactly by the average packet arrival rate and limited by the channel's service rate. By the average sender's delay we mean the average number of time instances between the moment when the packet is injected into the sender's queue and the moment when the packet has been successfully delivered.
\subsubsection*{Extension to the model in \cite{Shrader-MILCOM07}}
Basically we consider the same model and the same measures of performance as those in \cite{Shrader-MILCOM07}, with the following minor extensions:
\begin{itemize}
 \item In \cite{Shrader-MILCOM07}, binary coding is considered. Since coding is more efficient over a larger field, we will consider also larger fields, and in the limit, coding over an infinitely large field.
 \item For convenience, the authors of \cite{Shrader-MILCOM07} assume an encoding scheme that allows (except for a bulk size of 1) the choice of an all-zero encoding vector (i.e. all coefficients are zero.) This simplifies the analysis, but it is not ideal, especially for short bulk lengths and small fields, and it is easy to avoid in practice.
\end{itemize}

\subsection{Prior work}
\subsubsection{Basic retransmission} The retransmission protocol has been analyzed in \cite{Taka}. The average delay $D_{RE}$ and the average waiting time $W_{RE}$ are given by
\begin{equation}
\label{eq:delRE}
D_{RE} = \frac{1}{q} +W_{RE}, \mbox{ and } W_{RE} = \frac{\lambda(1-q)}{q(q-\lambda)}.
\end{equation}

\subsubsection{Random linear coding}
Shrader and Ephremides \cite{Shrader-MILCOM07} analyzed the RLC($K,\mathbb{F}_2$) protocol, where  $\mathbb{F}_2$ is the binary field. They observed that the protocol corresponds to a variant of what is known in queuing theory as a bulk service model \cite{Chaund,GroShoThoHar08}: a bulk of $k (\geq 1)$ packets are serviced simultaneously. Consequently they apply techniques for bulk service queues in order to obtain estimates of the delay.

Consider a bulk of size $k$, i. e. $k$ packets, $I_1,\ldots,I_k$ are selected for simultaneous service by the random coding protocol. The service time $X_k$ for the bulk is a random variable that denotes the number of packets that need to be sent (i. e., the number of time slots used) until the bulk has been successfully delivered. Let $b_k(z) = \sum_{i=0}^{\infty}b_{k,i}z^i$ be the probability generating function (p.g.f) for $X_k$, i. e. \[b_{k,i} = \mbox{Pr}\{\mbox{a size }k\mbox{-bulk is serviced in }i\mbox{ time slots }\}.\]  Note that in general,  $b_k(z)$ depends on $q$, $\mathbb{F}$, and  on the way in which the encoding coefficients are selected. For convenience, $b_0(z) = b_1(z)$ in order to deal with the situation when the queue is empty.
\begin{lemma}
Let $S_t$ be the number of packets in the queue immediately after the $t$-th bulk has been successfully delivered (and before service is initiated for the ($t+1$)-th bulk.) Let
$P_{k} = \lim_{t \rightarrow \infty}$ Pr$\{ S_t = k\}$ for $k=0,1,2,\ldots$,
and let $P_{K,\mathbb{F}}(z) = \sum_{i=0}^{\infty}P_iz^i$.
Then, as explained in  \cite{Shrader-MILCOM07},
\begin{equation}
\label{eq:P(z)}
 P_{K,\mathbb{F}}(z) = \frac{\displaystyle\sum_{k=0}^{K-1}P_k(z^Kb_k(\lambda z+1-\lambda)-z^kb_K(\lambda z+1-\lambda))}
{z^K-b_K(\lambda z+1-\lambda)}.
\end{equation}

\end{lemma}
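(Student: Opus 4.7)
The plan is to analyze the embedded Markov chain $\{S_t\}$ sampled at bulk-completion epochs, derive its one-step recursion, translate the resulting stationary balance equations into a generating-function identity, and solve for $P_{K,\mathbb{F}}(z)$.

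First I would identify the p.g.f.\ of the number of Bernoulli arrivals during a service of a bulk of size $k$. Conditioning on the service duration $X_k = j$, the arrivals form $j$ independent Bernoulli$(\lambda)$ trials, so their conditional p.g.f.\ is $(\lambda z+1-\lambda)^j$; averaging over $j$ with weights $b_{k,j}$ gives the unconditional p.g.f.\ $\beta_k(z) := b_k(\lambda z+1-\lambda)$. Next I would write the one-step recursion for the Markov chain: if $S_t = k$ with $0 \leq k < K$, the next bulk has size $k$ (the idle case $k=0$ being absorbed by the convention $b_0(z)=b_1(z)$, i.e.\ we wait for a single packet and serve it as a unit bulk), so $S_{t+1} = A_k$ where $A_k$ has p.g.f.\ $\beta_k(z)$; if $S_t = k \geq K$, the bulk has the maximal size $K$, leaving $k-K$ packets behind, so $S_{t+1} = (k-K) + A_K$.

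From these transitions I would write the stationary balance equations
\[
P_j \;=\; \sum_{k=0}^{K-1} P_k \Pr(A_k = j) \;+\; \sum_{k=K}^{\infty} P_k \Pr(A_K = j-k+K),
\]
then multiply by $z^j$ and sum over $j \geq 0$. Interchanging the summation order in the second term yields $\beta_K(z)\cdot z^{-K}\sum_{k \geq K} P_k z^k$; writing $\sum_{k\geq K} P_k z^k = P_{K,\mathbb{F}}(z) - \sum_{k=0}^{K-1} P_k z^k$ produces
\[
P_{K,\mathbb{F}}(z) \;=\; \sum_{k=0}^{K-1} P_k \beta_k(z) \;+\; z^{-K}\beta_K(z)\Bigl(P_{K,\mathbb{F}}(z) - \sum_{k=0}^{K-1} P_k z^k\Bigr).
\]
Multiplying by $z^K$, collecting the $P_{K,\mathbb{F}}(z)$ terms on one side, and substituting back $\beta_k(z)=b_k(\lambda z+1-\lambda)$ immediately gives the claimed formula~(\ref{eq:P(z)}).

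The only delicate point is the bookkeeping in the idle case $S_t = 0$: the definitional convention $b_0(z)=b_1(z)$ lets the $k=0$ summand slot neatly into the first sum over $0 \leq k < K$, so no separate boundary term is needed. The rest is purely algebraic manipulation of geometric-style sums and does not require any appeal to the specific form of $b_k(z)$ or the field $\mathbb{F}$; in particular, the same derivation applies to both the finite-field and infinite-field variants studied later in the paper.
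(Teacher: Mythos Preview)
Your derivation is correct and is precisely the standard embedded-Markov-chain argument for bulk-service queues; the paper itself does not give a proof of this lemma but simply cites \cite{Shrader-MILCOM07}, so there is no in-paper proof to compare against. Your treatment of the idle case via the convention $b_0(z)=b_1(z)$ matches the paper's setup, and the algebra leading to (\ref{eq:P(z)}) is exactly right.
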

\vspace{1cm}
Observe that $P_{K,\mathbb{F}}(z)$ is completely determined by $P_0,\ldots,P_{K-1}$. The authors of \cite{Shrader-MILCOM07} proceed by
\begin{itemize}
 \item applying techniques involving Rouch\'{e}'s theorem to obtain numerical values for $P_0,\ldots,P_{K-1}$,
 \item finding the expected number of packets in the queue \emph{immediately after a bulk has been successfully delivered}  to be
\begin{equation}
\overline{S} = \sum_{i=0}^{\infty}i P_i = dP(z)/dz|_{z=1},
\end{equation}
 and finally
 \item using $\overline{S}$ as an approximation to the average number of packets in the system \emph{at an arbitrary point in time,} and applying this with Little's law \cite{GroShoThoHar08,Kleinrock75} to obtain an approximation to the average delay $D_{RLC}$,
\begin{equation}
\label{eq:DRLCApprox}
D_{RLC} \approx \overline{S}/ \lambda.
\end{equation}
\end{itemize}
Note that although equation (\ref{eq:DRLCApprox}) is derived in \cite{Shrader-MILCOM07} for the case of binary encoding, the derivation remains valid for encoding in other fields provided  $b_{k}(z)$ is adjusted accordingly.
\section{New results}
\label{sec:nr}
Simulations indicate that (\ref{eq:DRLCApprox}) is a rather crude (under)estimate for the delay suffered for random linear coding, and that the problem is that the p.g.f  $P(z)$ does not really resemble the corresponding p.g.f. $Q(z)=Q_{K,\mathbb{F}}(z) = \sum_{i=0}^{\infty}Q_iz^i$, where $Q_i$ is the probability of observing $i$ packets in the system (i. e. waiting in queue or being serviced) \emph{at an arbitrary point in time.} Intuitively, this can be explained in the following way: Suppose that a bulk of size $k$, where $k$ is ``large'', is being processed. This means that the sender is transmitting at full speed for, on average, $kE[X_k]$ time slots. The expected number of arriving packets that arrive during the processing of the bulk is $\lambda kE[X_k]$. By definition, in a balanced queue $\lambda < k/E[X_k] $, and unless $\lambda$ is very close to $q$, the queue length will have been substantially reduced by the completion of service of the bulk.

In this section we describe an alternative approach to determining  $D_{RLC}$.
\subsection{Coding in a field of infinite size}
\label{subsec:codInf} We start by considering the (in some sense) simplified case where coding takes place in a field $\mathbb{F}$ which is very large: Technically we investigate the case of a finite field of $|\mathbb{F}|$ elements in the limit with $|\mathbb{F}| \rightarrow \infty$. We denote this field by $\mathbb{F}_{\infty}$. The important condition is that when a bulk of size $k$ is being transmitted, the probability that the receiver can recover the information upon successful reception of $k$ encoded packets is very close to $1$: The theoretical results obtained in this section are close to those obtained by simulation with coding over
$\mathbb{F}_{16}$, for example.

Observe a random bulk being serviced by the random coding scheme, and let $B_k$ be the probability that the bulk is of length $k$, for $1 \leq k \leq K$.

\begin{lemma}
The bulk distribution is given by
 \begin{eqnarray}
  B_1 & = & P_0+P_1 \\
  B_k & = & P_k \mbox{ for } k=2,\ldots,K-1, \mbox{ and }\\
 B_K  & = & 1-\sum_{i=1}^{K-1}B_i,
 \end{eqnarray}
where the $P_k$ are the same as in  (\ref{eq:P(z)}).
\end{lemma}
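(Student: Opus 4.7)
The plan is to observe that the bulk size that is served immediately after bulk $t$ completes is essentially a deterministic function of the queue length $S_t$, and then to read off the bulk distribution from the steady-state distribution $\{P_k\}$ of $S_t$ that is already provided by the previous lemma.

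First I would case-split on $S_t$. If $1 \leq S_t \leq K-1$, then the sender inspects the queue right after bulk $t$ is delivered, finds exactly $S_t$ packets there, and (by the RLC$(K,\mathbb{F})$ protocol) immediately starts serving a bulk of size $S_t$. If $S_t \geq K$, the new bulk has size exactly $K$. If $S_t = 0$, the sender is idle and must wait until the next packet arrives; since time is slotted and arrivals follow a Bernoulli process (at most one arrival per slot), the sender resumes service as soon as the queue contains exactly one packet, so the next bulk has size $1$. Thus the bulk size $k_{t+1}$ is a (measurable) function of $S_t$ (and, only when $S_t=0$, of the first subsequent inter-arrival slot, but this does not affect the size), and the sizes are distributed as follows: size $1$ comes from $\{S_t = 0\} \cup \{S_t = 1\}$, size $k$ (for $2 \leq k \leq K-1$) comes from $\{S_t = k\}$, and size $K$ comes from $\{S_t \geq K\}$.

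Next I would invoke the stationary distribution. By the preceding lemma, $P_k = \lim_{t \to \infty} \Pr\{S_t = k\}$ exists, and the embedded chain $\{S_t\}$ is ergodic, so the long-run fraction of bulks of size $k$ equals the stationary probability that $S_t$ lies in the corresponding event. This gives immediately
\begin{equation*}
B_1 = P_0 + P_1, \qquad B_k = P_k \;\; (2 \leq k \leq K-1), \qquad B_K = \sum_{i=K}^{\infty} P_i.
\end{equation*}
The final expression for $B_K$ is rewritten using $\sum_{i=0}^{\infty} P_i = 1$, yielding $B_K = 1 - \sum_{i=0}^{K-1} P_i = 1 - (P_0 + P_1) - \sum_{i=2}^{K-1} P_i = 1 - \sum_{i=1}^{K-1} B_i$, which matches the claimed expression.

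The only real subtlety—and thus the main potential obstacle—is the bookkeeping for the idle case $S_t = 0$. One must justify that when the sender wakes up after an idle period it sees exactly one packet rather than several, and that this does not disturb the identification between bulk-completion epochs and the embedded Markov chain underlying $\{P_k\}$. Both facts follow from the Bernoulli arrival assumption (no batch arrivals) and from the protocol's rule that service starts as soon as the queue becomes non-empty; everything else is just aggregating the stationary masses.
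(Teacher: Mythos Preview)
Your argument is correct and is essentially the same as the paper's: both case-split on the queue length at a bulk-selection moment, noting that an empty queue leads to a bulk of size $1$, a queue of $i$ packets with $2\le i<K$ gives a bulk of size $i$, and $K$ or more packets gives a bulk of size $K$. Your version is actually more explicit about the Bernoulli-arrival justification in the idle case and about invoking ergodicity, but the approach is identical.
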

\begin{proof}
Consider a moment when a new bulk is selected. If the queue is empty, the sender will wait until a packet arrives; hence $ B_1  =  P_0+P_1$. If the queue contains $i$ packets, $2 \leq i <K$ packets, the bulk will also consist of these $i$ packets; and if there are $K$ or more packets, the server will form a bulk of size $K$.
\end{proof}

Let $L = (L_1,L_2,\ldots)$ be a 
sequence corresponding to a specific pattern of packet arrivals at the sender, i.e. an instance of a source output sequence.
Also let $C = (C_1,C_2,\ldots)$ be another 
sequence representing a specific success pattern of the channel, i. e. an instance of a channel success sequence.
\begin{lemma}
\label{lem:bsm}
Let $L$ and $C$ be two specific arrival and channel success patterns, respectively. Apply the two alternative protocols \emph{retransmission} and \emph{RLC($K,\mathbb{F_\infty}$)} simultaneously to separate copies of $L$ and $C$, and consider a moment just when the \emph{RLC($K,\mathbb{F_\infty}$)} is ready to inspect the queue to form a new bulk (either at the very start of time, or because a bulk has just been successfully delivered.)

At this moment, which for convenience we may call a \emph{bulk selection moment},
\begin{enumerate}
 \item the retransmission protocol is also ready to select a new packet for transmission, (either because this is at the very start of time, or because a packet has just been successfully delivered,)
 \item the set of all packets already delivered by the retransmission protocol is equivalent to the set of packets already delivered by \emph{RLC($K,\mathbb{F_\infty}$)}, and
 \item the set of packets waiting in the queue, and the time that each packet has spent waiting so far, is independent of which of the two protocols is used.
\end{enumerate}
\end{lemma}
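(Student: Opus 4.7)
The plan is to prove the three claims simultaneously by induction on the sequence of bulk selection moments $t_0 = 0 < t_1 < t_2 < \cdots$ of the \emph{RLC($K,\mathbb{F}_\infty$)} protocol. The base case at $t_0 = 0$ is immediate, since both protocols start with empty queues, have delivered no packets, and are ready to begin service. For the inductive step I assume the three conditions hold at a bulk selection moment $t_n$ and verify them at $t_{n+1}$.

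The critical ingredient is a one-line observation about the infinite field: when a bulk of size $k$ is being serviced under \emph{RLC($K,\mathbb{F}_\infty$)}, any $k$ random linear combinations of $I_1,\ldots,I_k$ are linearly independent with probability $1$, so the bulk is successfully decoded after \emph{exactly} $k$ channel successes. Consequently, under the shared success pattern $C$, both RLC and retransmission require the same number of slots in the interval $[t_n, t_{n+1})$ to complete their work, namely the slot containing the $k$-th success after $t_n$.

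For the inductive step proper, I would first use the induction hypothesis (3) to argue that both protocols make the same selection at $t_n$: the queue content and ordering are identical, so both select the same initial packet(s); if the queue is empty at $t_n$, both protocols wait for the next arrival from $L$ and then begin service on that single packet. I then set $k = \min(|Q_{t_n}|, K)$ (possibly after that wait) and identify $I_1,\ldots,I_k$ as the packets in the RLC bulk. Because retransmission processes packets FIFO and delivers one packet per channel success, the first $k$ packets it delivers after $t_n$ are precisely $I_1,\ldots,I_k$, and the $k$-th delivery occurs in the same slot as the completion of the RLC bulk; this is $t_{n+1}$, giving claims (1) and (2) at $t_{n+1}$.

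Claim (3) at $t_{n+1}$ then follows by bookkeeping: the arrivals during $(t_n, t_{n+1}]$ are taken from $L$ and appended to both queues identically, and the removed packets are the common set $\{I_1,\ldots,I_k\}$; waiting times coincide because every surviving packet has an arrival time dictated solely by $L$. I do not expect a serious obstacle in the combinatorial bookkeeping, but the main subtlety to be handled carefully is the almost-sure nature of the ``$k$ successes suffice'' claim in the $|\mathbb{F}|\to\infty$ limit: the coupling in the statement holds for almost every realization of the encoding randomness, and the argument relies on discarding the measure-zero event in which the chosen coefficient vectors fail to span.
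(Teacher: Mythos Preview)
Your proposal is correct and follows essentially the same approach as the paper: an induction over bulk selection moments, with the key observation that over $\mathbb{F}_\infty$ a size-$k$ bulk completes after exactly $k$ channel successes, so retransmission and RLC finish the same packets at the same slot. Your write-up is simply more detailed (including the handling of the empty-queue wait and the almost-sure caveat) than the paper's terse version.
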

\begin{proof}
  This can be seen by a simple induction argument. The claims are obviously true at the start of time. Next, suppose this is true at some bulk selection moment and that the \emph{RLC($K,\mathbb{F_\infty}$)} selects a bulk of size $k$. Then, since both protocols will complete service of the $k$ information packets precisely when the channel has successfully delivered $k$ packets, the claim will be true also at the next bulk service moment.
\end{proof}

Note that for time instances that are not bulk selection moments, claims 1-3 of Lemma~\ref{lem:bsm} in general do not hold.
\begin{theorem}
\begin{equation}
\label{eq:delRLC}
D_{RLC} =  W_{RE} + \frac{\sum_{k=1}^K B_k \cdot k \cdot (k+1)}{2q},
\end{equation}
\end{theorem}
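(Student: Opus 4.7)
The plan is to compare, packet by packet, the delay incurred by RLC with the delay under pure retransmission on the same arrival sequence $L$ and channel-success pattern $C$, using Lemma~\ref{lem:bsm} as the coupling device that makes such a comparison meaningful, and then to combine that comparison with the bulk-size probabilities $B_k$ of the preceding lemma and the known retransmission delay~(\ref{eq:delRE}).

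First I would pick an arbitrary bulk selection moment $t^{\text{bulk}}$ at which RLC forms a bulk of size $k$ from packets $p_1,\ldots,p_k$. By Lemma~\ref{lem:bsm}, the retransmission protocol has the same queue at this moment, with identical arrival times $t^{\text{arr}}_1,\ldots,t^{\text{arr}}_k$, and processes the same $k$ packets in the same order using the same channel successes. Denoting by $Y_1,\ldots,Y_k$ the inter-success gaps that follow (each geometric with mean $1/q$), packet $p_j$ departs at $t^{\text{bulk}}+\sum_{i\le j} Y_i$ under retransmission and at $t^{\text{bulk}}+\sum_{i=1}^{k} Y_i$ under RLC.

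The central algebraic observation is that the RLC delay of $p_j$ can be decomposed as
\[
T^{RLC}_{p_j} \;=\; W^{RE}_{p_j} \;+\; \sum_{i=j}^{k} Y_i,
\]
where $W^{RE}_{p_j}=(t^{\text{bulk}}-t^{\text{arr}}_j)+\sum_{i<j}Y_i$ is exactly the waiting time $p_j$ would have experienced in the retransmission copy (from arrival to the start of its own service). The first summand bundles everything the packet would have waited under retransmission, and the second captures the additional penalty of having to wait for the rest of the bulk to be jointly decoded before leaving the system under RLC.

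Taking expectations, the first summand averages to $W_{RE}$ over all packets, by (\ref{eq:delRE}). For the second summand, the conditional expectation given $(k,j)$ is $(k-j+1)/q$; summing across the $k$ positions of a bulk of size $k$ yields $k(k+1)/(2q)$, and weighting by the bulk-size distribution $B_k$ of the preceding lemma produces $\sum_{k=1}^{K} B_k\,k(k+1)/(2q)$, so adding $W_{RE}$ recovers the claimed identity. The delicate point, which I would need to verify carefully, is that the empirical average of $W^{RE}_{p_j}$ taken over the RLC-indexed packets actually coincides with the Geo/Geo/1 waiting time $W_{RE}$: this is exactly where Lemma~\ref{lem:bsm} earns its keep, since the coupling forces identical arrival and queue histories at every bulk selection moment, so that the retransmission waiting times of the RLC packets form a stationary sample from the ordinary retransmission queue.
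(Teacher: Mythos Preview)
Your argument is essentially the paper's own proof: couple RLC($K,\mathbb{F}_\infty$) and retransmission via Lemma~\ref{lem:bsm}, split each packet's RLC delay into its retransmission waiting time plus a residual ``modified service time'' $(k-j+1)/q$, and then average over the bulk-size distribution $\{B_k\}$. The only cosmetic difference is that you introduce the explicit inter-success variables $Y_i$ and sum the residual over positions to obtain $k(k+1)/(2q)$ before weighting by $B_k$, whereas the paper first divides by $k$ to get the per-packet value $(k+1)/(2q)$ and then averages; the underlying decomposition and use of the coupling lemma are identical.
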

\begin{proof}
Suppose that at a bulk selection moment, the \emph{RLC($K,\mathbb{F_\infty}$)} selects a bulk of size $k$. Because of Lemma~\ref{lem:bsm}, it makes sense to compare the processing of the bulk directly with the retransmission protocol. We proceed to derive how much each packet will be delayed when \emph{RLC($K,\mathbb{F_\infty}$)} is used, compared to the delay when retransmission is used. Assume that  the packets in the bulk are $I_1,\ldots,I_k$, where $I_1$ is the first packet in the queue. Assign a fictious waiting time for each packet in the bulk, corresponding to the waiting time that the packet would experience with a retransmission scheme, and define the modified service time for each packet to be the total delay minus the fictious waiting time. Then the modified service time of $I_j$ is  $(k-(j-1))E[X_1] = (k-j+1)/q$. Thus, for a bulk of size $k$, the average delay is \[D_{RLC} =  W_{RE} + \frac{1}{k}\frac{\sum_{i=1}^k  i}{q}= W_{RE} + \frac{k+1}{2q}.\]
The theorem follows by averaging the second term (i. e., the service time) over all bulks.
\end{proof}

\begin{corollary}
 The delay of random linear coding over an infinte field is at most a factor of
\begin{equation}
\label{eq:coRatio}
\frac { (\lambda -q) K (K+1)}{2q ( 1-\lambda) }+\frac {\lambda (1-q)}{q ( 1-\lambda ) }
\end{equation}

worse than the delay of retransmission. This factor approaches $1$ as $\lambda \rightarrow q$.
\end{corollary}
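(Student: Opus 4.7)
The plan is to bound the ratio $D_{RLC}/D_{RE}$ by first replacing the distribution-dependent quantity $\sum_{k} B_k k(k+1)$ appearing in the theorem by its worst-case value, and then substituting the closed form of $W_{RE}$ from \eqref{eq:delRE} and simplifying.

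Concretely, from the previous theorem together with \eqref{eq:delRE},
\[
D_{RLC} = W_{RE} + \frac{\sum_{k=1}^{K} B_k\,k(k+1)}{2q}, \qquad D_{RE} = \frac{1}{q} + W_{RE}.
\]
Because $\sum_{k=1}^K B_k = 1$ and $k(k+1)$ is increasing in $k$, one has $\sum_k B_k k(k+1) \le K(K+1)$, whence $D_{RLC} \le W_{RE} + K(K+1)/(2q)$. The virtue of this step is that it bypasses the computation of the $B_k$, which by the preceding lemma would in turn require the $P_k$ of \eqref{eq:P(z)}. I would then substitute $W_{RE}=\lambda(1-q)/(q(q-\lambda))$ and place everything over the common denominator $q(q-\lambda)$; a short calculation yields $D_{RE}=(1-\lambda)/(q-\lambda)$, which already accounts for the $(1-\lambda)$ factor in the denominator of the claimed bound. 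Dividing the upper bound on $D_{RLC}$ by $D_{RE}$ and separating the contribution coming from the $W_{RE}$ term from that coming from $K(K+1)/(2q)$ then produces \eqref{eq:coRatio} summand by summand.

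For the limiting behaviour as $\lambda \to q$, the first summand of the claimed bound carries a factor $q-\lambda$ in its numerator and vanishes, while the second tends to $q(1-q)/(q(1-q))=1$, giving the claimed limit of $1$. A useful sanity check is $K=1$: the inequality $\sum_k B_k k(k+1) \le K(K+1)=2$ is then tight (every bulk has size exactly one), and the whole expression collapses to exactly $1$, consistent with the fact that RLC$(1,\mathbb{F})$ \emph{is} the retransmission protocol. The only real difficulty is bookkeeping in the algebraic simplification; there is no conceptual subtlety beyond the worst-case bound on $\sum_k B_k k(k+1)$.
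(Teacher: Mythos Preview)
Your proposal is correct and follows essentially the same argument as the paper: bound $\sum_k B_k\,k(k+1)\le K(K+1)$, form the ratio $D_{RLC}/D_{RE}$, and simplify. You supply more algebraic detail than the paper's terse proof and, in passing, correct what appears to be a sign slip in the printed bound (the factor should be $q-\lambda$ rather than $\lambda-q$, as your $K=1$ sanity check confirms).
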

\begin{proof}
From (\ref{eq:delRE}) and (\ref{eq:delRLC}), the factor is
\[
\frac{D_{RLC}}{D_{RE}} =  \frac{W_{RE} + \frac{\sum_{k=1}^K B_k \cdot k \cdot (k+1)}{2q}}
{W_{RE} + \frac{1}{q}}.
\]
(\ref{eq:coRatio}) follows since $\sum_{k=1}^K B_k \cdot k \cdot (k+1) \leq K(K+1)$.
\end{proof}

\subsection{Coding over a finite field}
\label{subsec:Finite} A similar approach can be applied in the case of a finite field. In this case,  comparing the delay of the RLC scheme with that of retransmission does not give an exact expression for the delay, since the waiting time is affected by the extra packet retransmissions required due to decoding errors. A lower bound on the RLC delay can, however, be derived, by adding the service time of the RLC scheme (which can be computed from knowledge of the bulk distribution) to the waiting of the basic retransmission scheme. We omit the details.
\subsection{Simulation results}
\label{subsec:Sim}
We have simulated the queueing models for the protocols under consideration. The results are shown in the figures in this section. (We had to make sure that the samples from the pseudo-random generator appear uncorrelated, as the standard Java pseudo-random generator failed to meet this criterion.) Each simulation curve is based on 100 values of $\lambda$, and for each $\lambda$ the average delay is measured over a time scale of $10^7$ time slots. We do not show simulations for the infinite field case, as they coincide with the curves derived from (\ref{eq:delRLC}).

Figure~\ref{fig:sim1} presents results from \cite{Shrader-MILCOM07}, from Theorem 1, and from computer simulations, for channel success rate $q=0.5$, while Figure~\ref{fig:sim2}
shows similar curves for $q=0.9$.
\begin{figure}[htbp]
\centerline{\includegraphics[width=9cm,keepaspectratio=true,angle=0]{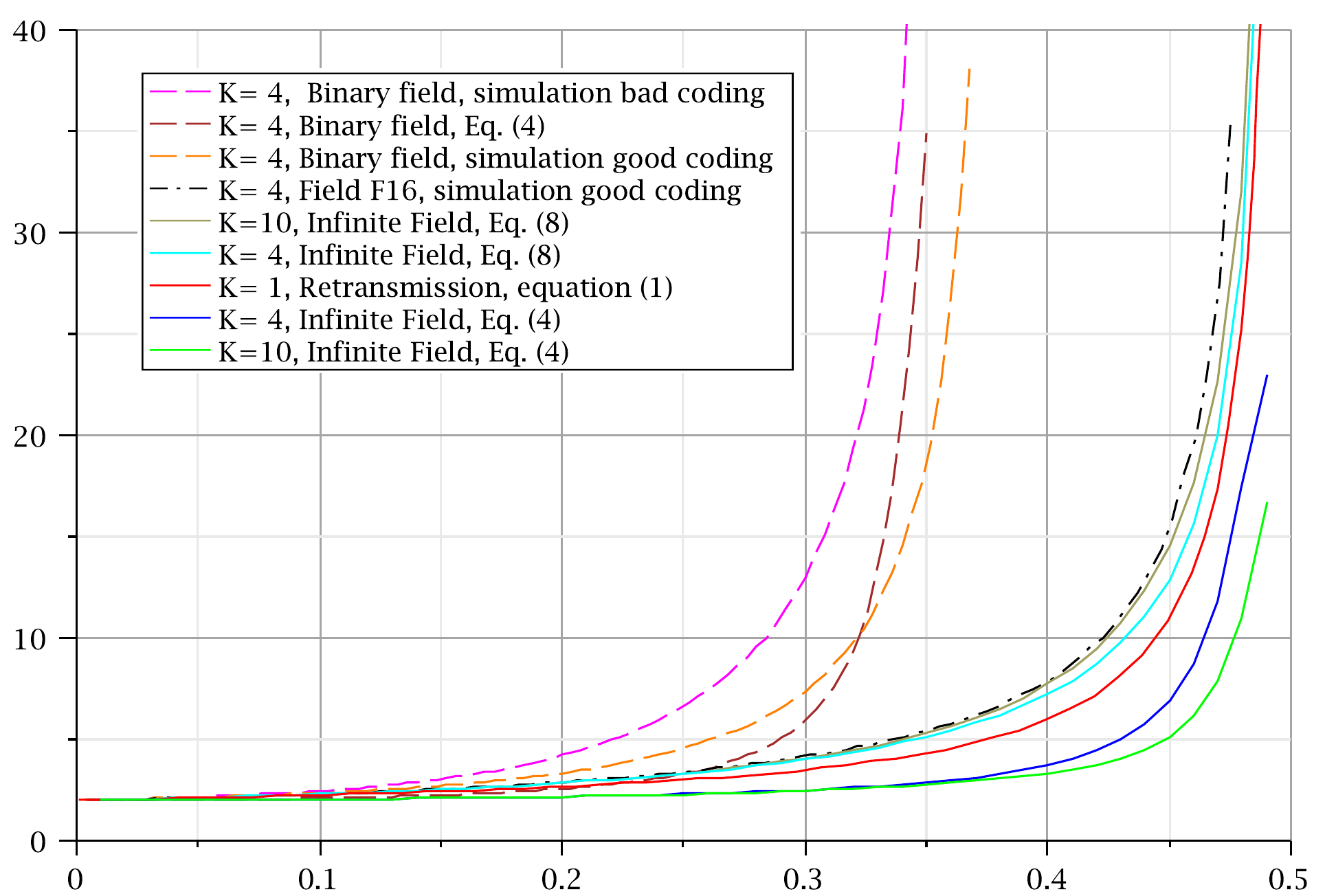}}
\caption{Average delay, in time slots, as a function of arrival rate $\lambda$ for $q=0.5$. Curves that refer to eq. \ref{eq:DRLCApprox} 
are calculated by that equation, even though the curves as such may not appear in \cite{Shrader-MILCOM07}. The simulation curves marked ``bad coding'' is obtained by using the RLC coding scheme in \cite{Shrader-MILCOM07}, which allows  transmission of the zero codeword for bulk size larger than one.  The simulation curves marked ``good coding'' is obtained by using an RLC coding scheme that does not allow transmission of the zero codeword. The legend is ordered according to the saturation point of curves from left to right.}\label{fig:sim1}
\end{figure}

\begin{figure}[htbp]
\centerline{\includegraphics[width=9cm,keepaspectratio=true,angle=0]{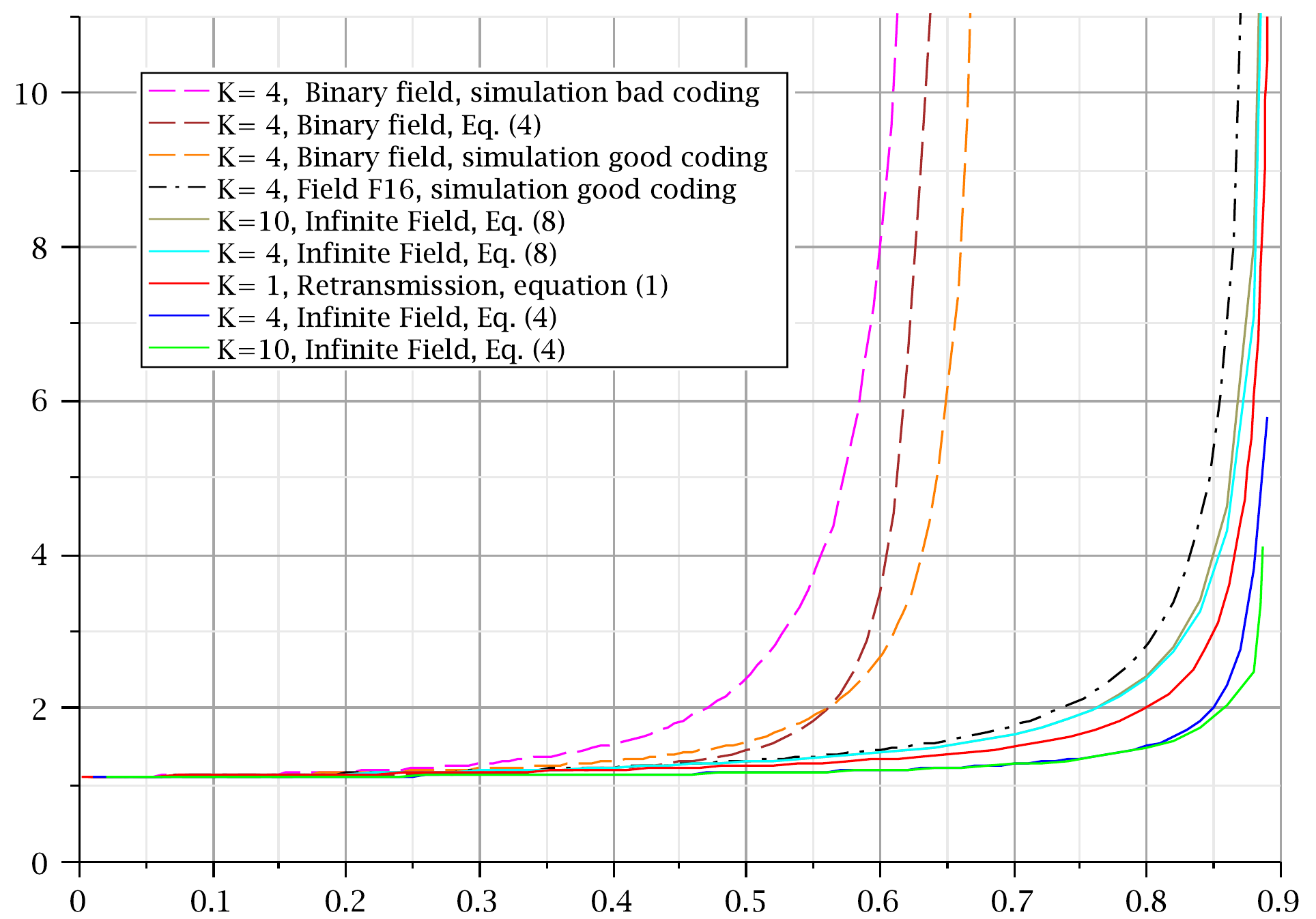}}
\caption{Average delay, in time slots, as a function of arrival rate $\lambda$ for $q=0.9$. For explanation: see the caption of Figure~\ref{fig:sim1}.}\label{fig:sim2}
\end{figure}

\subsubsection*{Observations}
\begin{itemize}
  \item 16-ary encoding for $K=4$ according to computer simulations performs very close to RLC($4,\mathbb{F}_\infty$) in terms of delay.
  \item The estimate of RLC($K,\mathbb{F}_\infty$) based on \cite{Shrader-MILCOM07} and repeated here in equation (\ref{eq:DRLCApprox}) decreases with $K$, while in fact the delay increases with $K$.
  \item \cite{Shrader-MILCOM07} underestimates RLC($K,\mathbb{F}_2$), as determined from simulation.
  \item Disallowing all-zero linear combinations makes the difference between  ``bad coding'' and ``good coding'', and can be seen to improve performance significantly for small $K$ and field sizes.
\end{itemize}

\section{Consequences/Implications}

\label{sec:sum}
As noted also in \cite{Shrader-MILCOM07}, random coding for the unicast model studied here will increase the delay as compared with the simple retransmission protocol, and random coding over a finite field and imposing a finite maximum bulk size will even reduce the channel capacity (as defined by the saturation point of the queue at the sender.) However, the loss can be substantially reduced by coding over, say, a field with 16 elements.

The estimate (\ref{eq:DRLCApprox}) (from \cite{Shrader-MILCOM07}) for the delay of the RLC($K,\mathbb{F}_\infty$) suggests that the delay decreases with increasing $K$. In fact equation (\ref{eq:delRLC}) shows that the opposite is true, even though the saturation point remains the same.

A more realistic queueing model could take into account, for example,
\begin{itemize}
 \item the transmission delay and the round trip delay,
 \item the cost of network resources,
 \item the effect of a lossy feedback channel,
 \item variance of delay.
\end{itemize}
It will be of interest to study a model extended in these directions. For example, the case of a nonzero round trip delay splits the concept of \emph{delay} into two separate issues: \emph{sender delay} and \emph{receiver delay}. For the sender delay, a nonzero round trip delay can be alleviated (at least in terms of queueing effects) by proper allocation of buffer space. However, it is reasonable to believe that a random coding approach may offer benefits with respect to receiver delay, since coding may reduce the expected number of round trip delays per information packet, but coding will also increase the queueing delay at the sender. Thus it requires further investigation to optimize the protocol.
\section*{Acknowledgment}
The authors would like to thank the Spanish Government
 through project MTM2007-66842-C02-01, Universidad de Valladolid for grant
PM08/33, and the Norwegian Research Council through the OWL
project, for supporting their research.


\end{document}